\newenvironment{remark}[1][Remark]{\begin{trivlist}
\item[\hskip \labelsep {\bfseries #1}]}{\end{trivlist}}
\theoremstyle{plain}
\newtheorem{theorem}{Theorem}[section]  
\newtheorem{proposition}[theorem]{Proposition}
\theoremstyle{definition}
\theoremstyle{remark}
\numberwithin{equation}{section}
\title{Discrete-Time Two-Strain Epidemic Dynamics on Complex Networks}
\author[1]{\normalsize Frank Namugera}
\affil[1]{\itshape Mathematics Department, Makerere University, Kampala, Uganda}
\affil[ ]{Email: \texttt{frank.namugera@mak.ac.ug} \vspace{-3em}}
\date{}
\begin{document}
\maketitle

\begin{abstract}

We investigate a discrete-time two-strain symbiotic epidemic model on complex networks with both random and long-range interactions. Our analysis examines how the co-infection recovery rate ($\mu$), the long-range decay exponent ($\alpha$), and the scale-free connectivity exponent ($\gamma$) shape epidemic persistence under cooperative dynamics. Comparison with a two-strain competition model shows how these parameters control strain dominance, coexistence, or extinction.

The results demonstrate that contagion dynamics are strongly affected by environmental randomness and long-range couplings. In facultative symbiosis, the co-infection recovery rate undergoes a clear phase transition, separating persistence from extinction. In the competitive setting, regimes with $\alpha < 2$ and $\gamma < 3$ markedly lower the epidemic threshold, allowing persistence even at small contagion rates ($\sigma$). Statistical analysis further reveals that $\gamma$ and $\alpha$ exert pronounced, nonlinear, and time-dependent effects on strain survival.

\end{abstract}

\textbf{Key words}: Co-infection, Symbiosis, Epidemic Spreading, Random environment, Long-range decay, Complex networks.


\section{Introduction}

Network-based epidemic models have become fundamental in studying contagions for capturing how spatial randomness~\cite{gomez2010discrete}, long-range interactions~\cite{estrada2011epidemic,namugera2025long}, and stochastic transmission shape transition dynamics. Beyond their theoretical significance, these models have proven essential in guiding public health interventions, as underscored during the COVID-19 pandemic, where controlling contact was one of the key strategies~\cite{he2020covid}. A central insight is that epidemic outcomes are deeply constrained by the topology and heterogeneity of the underlying contact network~\cite{pastor2001epidemic,newman2002spread,newman2010networks}, which governs both threshold behavior and large-scale spreading patterns.

While classical epidemic models have focused mainly on single-strain dynamics, real-world epidemics often involve confounding or interacting circulating pathogens. For instance, co-infections, which are simultaneous infections by multiple strains, are known to modulate transmissibility, recovery, and disease severity~\cite{oka2020spatial,yang2021epidemic,Namugera2025contact}. Moreover,  interactions may be competitive, where one pathogen suppresses another, or synergistic, where co-infection increases transmission or virulence. Despite their epidemiological relevance, most existing models treat co-infection in simplified settings, typically on homogeneous or locally connected graphs~\cite{funk2010interacting, susi2015co, mclean2018consequences, durrett2020symbiotic}.

In this work, we propose two fundamental modes of multi-strain interaction on a complex network: 1) the symbiosis model, where all strains benefit mutually from co-infection, and 2) the competition model, where the presence of one strain suppresses the survival of the other. Both interactions have been extensively investigated on simple graphs that do not incorporate range dynamics. For instance, we showed in~\cite{Namugera2025contact} that the epidemic threshold in a symbiotic model depends strongly on the independent thresholds of each strain. Furthermore, we showed the existence of a co-infection threshold, above which strains that would otherwise be subcritical can persist in nearest-neighbor interaction settings. Although prior work has investigated cooperative symbiosis, where the presence of one strain enhances the spread of another~\cite{chen2013outbreaks}, our focus is on the independent-spread case, where multiple strains may occupy the same node without directly facilitating one another. This accounts for facultative positive cooperation, where the delayed recovery may reduce the epidemic thresholds of the individual strains.

Existing studies indicate that co-infection processes exhibit distinctive transition behaviors, yet the role of long-range interactions remains less  explored. Environmental heterogeneity is also known to strongly influence epidemic dynamics~\cite{sanz2014dynamics, hebert2015complexity}. Building on this, Cui et al.~\cite{cui2017mutually,cui2019epidemic} extended co-infection models to heterogeneous and multiplex networks, emphasizing cooperative effects; however, their analyses were not developed within a discrete-time framework.

On the other hand, in a perfect competition model, which assumes a winner-takes-all mechanism, the system exhibits three distinct regimes: coexistence of both strains, domination by a single strain, or extinction of all strains. This problem has been rigorously analyzed in lattice-based nearest-neighbor models, most notably by Doshi et al.~\cite{doshi2021competing}, who provide a full characterization of the competitive dynamics. Beutel et al.~\cite{beutel2012interacting} identify threshold conditions that govern transitions between coexistence and competitive exclusion, while Newman~\cite{newman2005threshold} demonstrates that coexistence is possible only within a narrow intermediate range of transmissibility for the dominant pathogen. These results, largely derived from ODE-based and nearest-neighbor frameworks, highlight the presence of threshold phenomena and competition-induced phase transitions. This serves as a reference point for extending the analysis to long-range and stochastic epidemic models.

We develop a discrete-time epidemic model with long-range interactions in a random environment. Unlike continuous-time formulations, we use difference equations to capture the system's temporal evolution more explicitly. While continuous-time models often average out short-term fluctuations, the discrete-time framework preserves finer details of the dynamics, enhancing interpretability. Moreover, by incorporating long-range interactions and environmental randomness, the model better reflects population-level heterogeneity and demographics, making it more applicable than nearest-neighbor formulations.

This work is, in one way, an extension of the nearest neighbour case in ~\cite{Namugera2025contact}, allowing for long-range transmission as studied by ~\cite{estrada2011epidemic}, and enabling environmental heterogeneity, also covered by \cite{gomez2010discrete}. 
We begin by describing the environment where the transmission occurs through long-range interactions governed by a distance-weighted kernel:
\[
p_{ij} \propto \frac{r_{ij}}{d_{ij}^\alpha},
\]
where \( d_{ij} \) denotes the shortest-path distance between nodes \( i \) and \( j \), and \( \alpha > 0 \) is a parameter controlling the strength of spatial decay. The term \( r_{ij} \) captures the probability that node \( i \) successfully transmits to node \( j \) through multiple stochastic attempts, and is given by
\[
r_{ij} = 1 - \left( 1 - \frac{w_{ij}}{w_i} \right)^{\lambda_i},
\]
where \( \lambda_i \) is the number of infection attempts originating from node \( i \), \( w_{ij} \) is the edge weight from \( i \) to \( j \) drawn from a random environment, and \( w_i = \sum_k w_{ik} \) is the total outgoing weight from node \( i \). The environmental variability encoded in \( w_{ij} \) is assumed to follow a uniform distribution, reflecting random heterogeneity in interaction strength. This formulation generalizes distance-based epidemic models~\cite{juhasz2015longrange} and aligns with recent approaches to mobility-driven contagion~\cite{he2020covid}. For this model, we consider the case where $\lambda_i = 1$. This means that for every time $t$, there is one possible contact. Then, the environment reduces to $r_{ij} = \omega_{ij}$ if we do not consider normalization at each node.

To analyze these phenomena, we study the spectral properties of the effective transmission operator and derive thresholds for epidemic survival. Numerical simulations validate these theoretical predictions and reveal nontrivial survival behavior driven by the interplay of coinfection dynamics, network structure, and spatial decay. Overall, this work presents a comprehensive and adaptable framework for understanding multi-strain epidemic processes in spatially structured and stochastic environments.

\paragraph{Structure of the paper:} Section 2 explores the mathematical structure of the two models. First, in section 2.1, we examine the co-infection model, its analysis, and epidemic threshold. Later, in section 2.5, we compare this to the competition model and discuss its stability conditions. In section 3, we perform numerical simulations for both models. We analyze the results and conclude with final remarks in section 4. 

\section{Model Description}
\subsection{Co-infection model}

We describe a two-strain cooperative SIS (Susceptible, Infected, Susceptible) contact process, where recovery is temporary and no permanent immunity is possible. The population is represented by a finite graph \( G = (\mathcal{V}, \mathcal{E}) \), where each vertex \( i \in \mathcal{V} \) denotes an individual and edges \( \mathcal{E} \) specify possible transmission pathways. Each node occupies one of four epidemic states: susceptible (\(S\)), infected with strain \(A\), infected with strain \(B\), or coinfected (\(AB\)), forming the state space \( \mathcal{S} = \{S, A, B, AB\} \). The probabilities that node \(i\) is in states \(A\), \(B\), or \(AB\) at discrete time \(t\) are denoted \(p_{i,t}^A\), \(p_{i,t}^B\), and \(p_{i,t}^{AB}\), respectively. The system evolves according to a nonlinear stochastic difference equation incorporating both transmission and recovery. Nodes infected with either strain recover at rate \(\delta\), while co-infected nodes recover more slowly at rate \(\mu < \delta\). 

The update rules governing the infection probabilities are
\begin{align}
p_{i,t+1}^A &= (1 - p_{i,t}^A)(1 - \zeta_{i,t+1}^A) + p_{i,t}^A (1-\delta)\zeta_{i,t+1}^B + \tfrac{1}{2}\mu p_{i,t}^{AB}, \label{eq:Aupdate} \\
p_{i,t+1}^B &= (1 - p_{i,t}^B)(1 - \zeta_{i,t+1}^B) + p_{i,t}^B (1-\delta)\zeta_{i,t+1}^A + \tfrac{1}{2}\mu p_{i,t}^{AB}, \label{eq:Bupdate} \\
p_{i,t+1}^{AB} &= (1 - p_{i,t}^{AB})(1 - \zeta_{i,t+1}^{AB}) + (1-\mu)p_{i,t}^{AB}, \label{eq:ABupdate}
\end{align}
where the last term in \eqref{eq:Aupdate}–\eqref{eq:Bupdate} reflects transitions from co-infection back to single-strain infection, and the structure of \eqref{eq:ABupdate} captures both persistence and rapid reinfection of coinfected nodes.

The probability that node \(i\) avoids infection at time \(t+1\) is denoted by $\zeta_{i,t+1}^K$ where $K$ is the strain being avoided. Hence,
\begin{align}
\zeta_{i,t+1}^A &= \prod_{j \in \mathcal{N}_{LR}(i)} \big(1 - \beta_{ij}^A p_{j,t}^A\big)\big(1 - \beta_{ij}^{AB} p_{j,t}^{AB}\big), \\
\zeta_{i,t+1}^B &= \prod_{j \in \mathcal{N}_{LR}(i)} \big(1 - \beta_{ij}^B p_{j,t}^B\big)\big(1 - \beta_{ij}^{AB} p_{j,t}^{AB}\big), \\
\zeta_{i,t+1}^{AB} &= \big[1 - p_{i,t}^A(1-\zeta_{i,t+1}^B)\big]\big[1 - p_{i,t}^B(1-\zeta_{i,t+1}^A)\big], \label{eq:zetaab}
\end{align}
where
\begin{equation}
 \label{eq:beta}
\beta_{ij}^K = \sigma_K\, r_{ij}\, d_{ij}^{-\alpha}, \quad K \in \{A,B,AB\}. 
\end{equation}

Here, \(\sigma_K\) is the strain-specific infection strength.
This formulation places the model within the broader class of stochastic interacting particle systems with long-range couplings, thereby linking epidemic propagation to nonequilibrium phase transitions. In contrast to our earlier work~\cite{Namugera2025contact}, the present framework excludes instantaneous reinfection and introduces a redefined expression for the probability of avoiding a co-infection.

\subsection{Nonlinear Analysis}

We now analyze the discrete-time dynamics of the nonlinear contact process defined in Eqs.~\eqref{eq:Aupdate}–\eqref{eq:zetaab}. Let the infection state at time \( t \) be:
\[
\mathbf{p}_t := \left( p_{i,t}^A, p_{i,t}^B, p_{i,t}^{AB} \right)_{i \in \mathcal{V}} \in [0,1]^{3n},
\]
where \( n = |\mathcal{V}| \) is the number of nodes. Define the feasible state space as the compact hypercube \( \Omega := [0,1]^{3n} \). The first proposition examines whether the model defines a valid probability measure. Given the presence of multiple parameters, it is essential to verify that the model is well-posed. To this end, we assess the continuity of the update function and the forward-invariance of the state space to ensure that probabilities remain within a valid range over time.

\begin{proposition}[Well-Posedness of the Dynamics]
Let \( \mathbf{F} : \Omega \to \Omega \) be the update map defined by the system \eqref{eq:Aupdate}–\eqref{eq:ABupdate}. Then:
\begin{enumerate}
    \item \emph{(Continuity)} The map \( \mathbf{F} \) is continuous on \( \Omega \).
    \item \emph{(Forward-Invariance)} The image of \( \Omega \) is contained in \( \Omega \), i.e., \( \mathbf{F}(\Omega) \subseteq \Omega \).
\end{enumerate}
\end{proposition}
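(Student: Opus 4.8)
The plan is to prove the two assertions separately, with continuity being essentially bookkeeping and forward-invariance being the substantive point. For \textbf{continuity}, I would observe that each component of $\mathbf{F}$ is built by composition, multiplication, and addition of the coordinate projections $\mathbf{p}_t \mapsto p_{i,t}^A$, etc., together with the fixed constants $\delta, \mu, \lambda_K, \omega_{ji}, d_{ji}^{-\alpha}$. Concretely, the avoidance probabilities $\zeta_{i,t+1}^A$, $\zeta_{i,t+1}^B$ and $\zeta_{i,t+1}^{AB}$ in \eqref{eq:zetaK}--\eqref{eq:zetaAB} are finite products of affine (hence polynomial) functions of the coordinates, so they are polynomials on $\Omega$; the right-hand sides of \eqref{eq:Aupdate}--\eqref{eq:ABupdate} are then polynomials in the coordinates and in these $\zeta$'s. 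A finite composition of polynomials is a polynomial, hence continuous on all of $\bR^{3n}$ and a fortiori on the compact set $\Omega$. This disposes of part (1).

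For \textbf{forward-invariance} I need to show that if $\mathbf{p}_t \in [0,1]^{3n}$ then every component of $\mathbf{p}_{t+1}$ lies in $[0,1]$. The first step is a pair of auxiliary bounds: each $\beta_{ji}^K = \lambda_K \omega_{ji} d_{ji}^{-\alpha}$ lies in $[0,1]$ (this requires the standing assumption, implicit in the model, that $\lambda_K \omega_{ji} d_{ji}^{-\alpha} \le 1$ for all relevant $i,j$; I would state this explicitly as the hypothesis under which the proposition holds, since without it even one update can leave $[0,1]$), and consequently each factor $1 - \beta_{ji}^K p_{j,t}^K$, $1 - \beta_{ji}^{AB} p_{j,t}^{AB}$, and $1 - \beta_{ji}^{AB}(p_{i,t}^A \rho_j^B + p_{i,t}^B \rho_j^A)$ lies in $[0,1]$ — for the last one I also use $\rho_j^A, \rho_j^B \in [0,1]$, which itself needs $p_{j,t}^B + p_{j,t}^{AB} \le 1$, i.e. the constraint that the three probabilities at a node sum to at most $1$; I would either carry this simplex constraint as part of the definition of $\Omega$ or verify it is preserved. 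Since $[0,1]$ is closed under multiplication, each $\zeta_{i,t+1}^K, \zeta_{i,t+1}^{AB} \in [0,1]$, and hence $1-\zeta \in [0,1]$ as well.

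The second step is to check each update equation. Lower bounds are immediate: in \eqref{eq:Aupdate}--\eqref{eq:ABupdate} every summand is a product of quantities already shown to lie in $[0,1]$ together with nonnegative constants $\delta$, $1-\delta$, $\tfrac12\mu$, $1-\mu$ (using $0 \le \mu < \delta \le 1$), so each right-hand side is $\ge 0$. For the upper bound $\le 1$ the cleanest route is to regroup. For \eqref{eq:ABupdate}, write the right-hand side as $(1-p_{i,t}^{AB})(1-\zeta^{AB}) + p_{i,t}^{AB}\big[(1-\mu) + \mu(1-\zeta^{AB})\big] = (1-p_{i,t}^{AB})(1-\zeta^{AB}) + p_{i,t}^{AB}(1 - \mu\zeta^{AB})$, which is a convex combination, with weights $1-p_{i,t}^{AB}$ and $p_{i,t}^{AB}$, of the two quantities $1-\zeta^{AB} \in [0,1]$ and $1-\mu\zeta^{AB} \in [0,1]$, hence lies in $[0,1]$. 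For \eqref{eq:Aupdate} I would similarly bound $(1-p_{i,t}^A)(1-\zeta^A) + (1-\delta)\zeta^B p_{i,t}^A + \delta p_{i,t}^A(1-\zeta^A) + \tfrac12\mu p_{i,t}^{AB}$ from above by replacing $\zeta^B$ and $1-\zeta^A$ by $1$ in the middle terms, giving $\le (1-p_{i,t}^A) + (1-\delta)p_{i,t}^A + \delta p_{i,t}^A + \tfrac12\mu p_{i,t}^{AB} = 1 + \tfrac12\mu p_{i,t}^{AB}$, which is not quite $\le 1$ — so the honest argument must exploit cancellation between the A, B and AB coordinates, e.g. by showing $p_{i,t+1}^A + p_{i,t+1}^B + p_{i,t+1}^{AB} \le 1$ as a single inequality (the $\tfrac12\mu p^{AB}$ terms in the A- and B-updates then sum to $\mu p^{AB}$, matching the $-\mu p^{AB}\zeta^{AB}$-type deficit in the AB-update).

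The \textbf{main obstacle} is exactly this last point: the componentwise bound $p^A_{i,t+1}\le 1$ is false in general, and what is actually invariant is the simplex $\{(p^A,p^B,p^{AB}): p^A,p^B,p^{AB}\ge 0,\ p^A+p^B+p^{AB}\le 1\}$ at each node. So I expect the real work to be (a) pinning down that $\Omega$ should be this product of simplices rather than the full cube $[0,1]^{3n}$ (or else reinterpreting the statement), and (b) proving the aggregate inequality $\sum_{K} p_{i,t+1}^{K} \le 1$ by summing \eqref{eq:Aupdate}--\eqref{eq:ABupdate}, carefully collecting the $\delta$-, $\mu$- and $\zeta$-terms and using $\mu<\delta$ and $\zeta^{AB}\in[0,1]$ to absorb the residual. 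Everything else — nonnegativity, the $\beta \le 1$ reductions, and continuity — is routine once the correct state space is fixed.
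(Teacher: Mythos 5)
Your continuity argument is the same as the paper's (finite sums and products of continuous functions), and your diagnosis of the forward-invariance step is sharper than the paper's own treatment: the componentwise bound \(p_{i,t+1}^A\le 1\) really does fail on the cube. Indeed the paper's displayed estimate
\[
p_{i,t+1}^A \leq (1 - p_{i,t}^A + \delta p_{i,t}^A)(1 - \zeta_{i,t+1}^A) + (1 - \delta)p_{i,t}^A + \tfrac{1}{2}\mu p_{i,t}^{AB} \leq 1
\]
is false in general: at \(\zeta_{i,t+1}^A=0\) the middle expression equals \(1+\tfrac12\mu p_{i,t}^{AB}>1\) whenever \(\mu p_{i,t}^{AB}>0\), exactly the residual \(1+\tfrac12\mu p^{AB}\) you computed. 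Your convex-combination rewriting of the \(AB\)-update and your observation that the factors of \(\zeta^{AB}\) only stay in \([0,1]\) if \(\rho_j^A,\rho_j^B\le 1\) (i.e.\ a per-node simplex constraint) are both correct points that the paper glosses over.

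The genuine gap is that your proposed repair is left as a conjecture and, as stated, it does not close either. You defer the whole difficulty to the aggregate inequality \(p_{i,t+1}^A+p_{i,t+1}^B+p_{i,t+1}^{AB}\le 1\) on the product of simplices, but if you actually sum \eqref{eq:Aupdate}--\eqref{eq:ABupdate} at a fully susceptible node (\(p_{i,t}^A=p_{i,t}^B=p_{i,t}^{AB}=0\), so \(\zeta_{i,t+1}^{AB}=1\)) you get \((1-\zeta_{i,t+1}^A)+(1-\zeta_{i,t+1}^B)\), which exceeds \(1\) as soon as both avoidance probabilities drop below \(1/2\); so the simplex is not forward-invariant for these update rules either. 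In other words, the obstacle you flag is real, but the cancellation you hope for between the \(A\)-, \(B\)- and \(AB\)-equations is not present in the model as written: making the proposition true requires modifying the hypotheses (e.g.\ smallness conditions on the \(\beta_{ji}^K\) forcing \(\zeta\) close to \(1\), or a reformulation of the primary-infection terms so that the competing events \(S\to A\) and \(S\to B\) share probability mass), not just a better regrouping. You should either exhibit the explicit counterexample and state the corrected hypothesis, or carry out the aggregate computation and show exactly where it fails; as it stands the proof is incomplete at the one step that matters, albeit for a reason the paper's own proof also fails to address.
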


\begin{proof} {}

\textbf{(1) Continuity:} 

Each component of \( \mathbf{F} \) is constructed from the update rules \eqref{eq:Aupdate}–\eqref{eq:ABupdate}, which involve addition, multiplication, and scalar multiplication of the infection probabilities and transmission terms. The transmission probabilities \(\zeta_{i,t+1}^A\) and \(\zeta_{i,t+1}^B\) are finite products of terms of the form \((1 - \beta_{ij}^K p_{j,t}^K)(1 - \beta_{ij}^{AB} p_{j,t}^{AB})\), while \(\zeta_{i,t+1}^{AB}\) is given by a product of linear functions of \(p_{i,t}^A\) and \(p_{i,t}^B\). Since all coefficients lie in \([0,1]\), each factor is continuous, and finite sums and products of continuous functions are continuous. Substituting these into the update rules shows that every component of \( \mathbf{F} \) is continuous on \(\Omega\).

\textbf{(2) Forward-Invariance.}  
Fix \( i \in \mathcal{V} \). All the current state satisfies \( p_{i,t}^A, p_{i,t}^B, p_{i,t}^{AB} \in [0,1] \). We must show that the updates \eqref{eq:Aupdate}–\eqref{eq:ABupdate} also remain in \([0,1]\).  

Since \( \delta, \mu \in [0,1] \) and each infection kernel satisfies \( \beta_{ij}^K = \sigma_K r_{ij} d_{ij}^{-\alpha} \in [0,1] \), the factors 
\((1 - \beta_{ij}^K p_{j,t}^K)\) and \((1 - \beta_{ij}^{AB} p_{j,t}^{AB})\) lie in \([0,1]\). Consequently, the products defining 
\(\zeta_{i,t+1}^A\), \(\zeta_{i,t+1}^B\), and \(\zeta_{i,t+1}^{AB}\) also belong to \([0,1]\).  

Consider first the update for strain \(A\). The expression  
\[
p_{i,t+1}^A = (1-p_{i,t}^A)(1-\zeta_{i,t+1}^A) 
+ p_{i,t}^A (1-\delta)\zeta_{i,t+1}^B 
+ \tfrac{1}{2}\mu p_{i,t}^{AB}
\]  
is a sum of nonnegative terms, so \(p_{i,t+1}^A \geq 0\). For the upper bound, observe that each term is bounded by its coefficient, giving  
\[
p_{i,t+1}^A \leq (1-p_{i,t}^A) + p_{i,t}^A(1-\delta) + \tfrac{1}{2}\mu.
\]  
Since \((1-p_{i,t}^A) + p_{i,t}^A(1-\delta) \leq 1\) and \(\tfrac{1}{2}\mu \leq \tfrac{1}{2}\), the right-hand side never exceeds 1. Thus \(p_{i,t+1}^A \in [0,1]\). By symmetry, the same argument applies to the update for \(p_{i,t+1}^B\).  

Finally, consider the coinfection update  
\[
p_{i,t+1}^{AB} = (1 - p_{i,t}^{AB})(1 - \zeta_{i,t+1}^{AB}) + (1-\mu)p_{i,t}^{AB}.
\]  
Both summands are products of values in \([0,1]\), and hence nonnegative. Moreover, the first term is bounded by \(1-p_{i,t}^{AB}\) and the second by \(p_{i,t}^{AB}\). Adding these bounds yields at most \(1\). Therefore \(p_{i,t+1}^{AB} \in [0,1]\).  We conclude that every updated component remains in \([0,1]\), and hence the state space \(\Omega\) is forward-invariant under the dynamics: \(\mathbf{F}(\Omega) \subseteq \Omega\).  
\end{proof}

\subsection{Linearization at the Disease-Free Equilibrium (DFE)}

Define the disease-free equilibrium (DFE) as \( \mathbf{p}^* = \mathbf{0} \in \mathbb{R}^{3n} \), corresponding to zero prevalence of all strains. To study local stability of this point, we compute the Jacobian matrix \( J := \nabla g(\mathbf{0}) \) of the discrete-time map \( g(.) \), linearized about \( \mathbf{p}^* = \mathbf{0} \) which contains all first-order partial derivatives of $g(.)$ evaluated at the DFE:
\[
\big[\nabla g(\mathbf{0})\big]_{ij} = 
\frac{\partial g_i}{\partial p_j} \Big|_{\mathbf{p} = \mathbf{0}}, 
\quad i,j = 1, \dots, 3n.
\]

Let \( W_K := (\mathbf{R} \circ \mathbf{D}^{-\alpha} \circ \mathbf{K}) \in \mathbb{R}^{n \times n} \) denote the weighted infection matrix for strain \( K \in \{A, B, AB\} \), where \( \mathbf{R} \in [0,1]^{n \times n} \) contains the random environmental weights, and \( \mathbf{D}^{-\alpha} \in \mathbb{R}^{n \times n} \) encodes the distance-based decay. The Hadamard product is denoted by \( \circ \).

The Jacobian matrix \( J \in \mathbb{R}^{3n \times 3n} \), acting on the state vector \( \mathbf{p} = (p^A, p^B, p^{AB}) \), has the following block upper-triangular form:
\[
J =
\begin{bmatrix}
\sigma_A W_A + (1-\delta) \mathrm{I}_n & 0 & \sigma_{AB} W_{AB} + \frac{1}{2}\mu \mathrm{I}_n\\
0 & \sigma_B W_B + (1-\delta) \mathrm{I}_n &  \sigma_{AB} W_{AB} + \frac{1}{2} \mu \mathrm{I}_n\\
0 & 0 & (1 - \mu) \mathrm{I}_n \\
\end{bmatrix},
\]
where \( W_A \), \( W_B \)  and \( W_{AB} \) describe the linearized infection propagation from neighbors for strains A, B and AB, \( \mu \in [0,1] \) is the rate at which individuals with co-infection resolve into single-strain infections, \( \mathrm{I}_n \) denotes the \( n \times n \) identity matrix.

\subsection{Epidemic Threshold and Stability}
\begin{theorem}[Epidemic Threshold and Local Stability]
\label{thm:threshold}
Let $\lambda_1(W_K)$ be the spectral radius of $W_{K}$, and consider the linearization of the system at the disease-free equilibrium $\mathbf{p}^* = \mathbf{0} \in \mathbb{R}^{3n}$.  
Then $\mathbf{p}^*$ is locally asymptotically stable if 
\[
\frac{\sigma_K}{\delta} < \frac{1}{\lambda_{1}(W_K)}
\]
and is otherwise unstable for $K \in \{A,B\}$.
\end{theorem}

\begin{proof}

We study the stability of the disease-free configuration 
\(\mathbf{p}^* = \mathbf{0}\) 
under the discrete-time evolution
\[
\mathbf{p}_{t+1} = g(\mathbf{p}_t).
\]
The linearized dynamics near $\mathbf{p}^*$ are governed by the Jacobian $(J)$. 

By construction, $J$ is block upper-triangular, with diagonal blocks
\[
J_A = (1-\delta) \mathrm{I}_n + \sigma_{A} W_A, 
\qquad
J_B = (1-\delta) \mathrm{I}_n + \sigma_{B} W_B,
\qquad
J_{AB} = (1-\mu)\mathrm{I}_n,
\]
where $J_A$ and $J_B$ act on the infection probabilities of strains $A$ and $B$, and $J_{AB}$ corresponds to the co-infection component. Since the spectrum of a block upper-triangular matrix is the union of the spectra of its diagonal blocks, then
\[
\operatorname{spec}(J) = \operatorname{spec}(J_A) \cup \operatorname{spec}(J_B) \cup \operatorname{spec}(J_{AB}) .
\]

For discrete-time dynamics, linear stability requires that all eigenvalues of $J$ lie strictly inside the unit disk, i.e.
\[
\rho(J) < 1,
\]
The scalar eigenvalue $1-\mu$ satisfies $|1-\mu| < 1$ whenever $\mu>0$, so it does not influence the stability threshold. Hence the relevant contributions come from the infection blocks $J_A$ and $J_B$.

Since $W_K$ is nonnegative and irreducible, the Perron--Frobenius theorem (\ref{thm:pefro}) implies that its spectral radius is a real, simple eigenvalue $\lambda_{1}(W_K)$. Thus the leading eigenvalue of each block is shifted by the survival term $(1-\delta)$:
\[
\rho(J_K) = (1-\delta) + \lambda_{1}(\sigma_{K} W_K), \qquad K \in \{A,B\}.
\]

Therefore, the disease-free equilibrium is locally asymptotically stable if 
\[
\sigma_{A} \lambda_{1}(W_A) < \delta
\quad \text{and} \quad
\sigma_{B} \lambda_{1}(W_B) < \delta.
\]

If either condition fails, the spectral radius of $J$ exceeds one, leading to exponential growth of small perturbations, and the disease-free state becomes linearly unstable.
\end{proof}

The threshold takes the form \(\tau = 1 / \lambda_1(W_K)\), and persistence occurs if \(\sigma_K / \delta > \tau\). Increasing the spatial decay in the distance kernel $\mathbf{D}^{-\alpha}$ reduces the spectral radius $\lambda_1(W_K)$, thereby raising the epidemic threshold and suppressing disease spread. In contrast, stronger environmental randomness $\mathbf{R}$ amplifies the spectral weight, increasing $\lambda_1(W_K)$ and lowering the threshold. The co-infection parameter $(\mu)$ influences the subsequent nonlinear dynamics but does not affect the initial stability of the disease-free equilibrium. Consequently, the onset of an epidemic is governed by the interplay between transmission-to-recovery ratios, spatial structure, and environmental variability.

\subsection{Contrasting with the competition model}

We now turn to the {competitive regime}, in which colonization by one strain excludes subsequent colonization by its competitor. This ``winner-takes-all'' assumption corresponds to the biologically relevant setting where the first successful infection of a susceptible host prevents superinfection. Retaining the discrete-time framework, the dynamics are given by:
\begin{align}
p_{i,t+1}^A &= (1 - p_{i,t}^A - p_{i,t}^B)(1 - \zeta_{i,t+1}^A) 
+ (1 - \delta_A)\,p_{i,t}^A, \label{eq:A_competition} \\
p_{i,t+1}^B &= (1 - p_{i,t}^A - p_{i,t}^B)(1 - \zeta_{i,t+1}^B) 
+ (1 - \delta_B)\,p_{i,t}^B. \label{eq:B_competition}
\end{align}

Here, \( p_{i,t}^K \) is the probability that node \( i \) is infected by strain \( K \in \{A,B\} \) at time \( t \). The infection pressure is
\begin{align}
\zeta_{i,t+1}^K = \prod_{j \in \mathcal{N}_{LR}(i)} \big(1 - \beta_{ij}^K \, p_{j,t}^K \big), \label{eq:zeta_competition}
\end{align}
with transmission weights \( \beta_{ij}^K \) modulated by environment and distance effects as defined in \autoref{eq:beta}. Unlike in the symbiotic model, recovery is strain-specific, with rates \( \delta_A \) and \( \delta_B \), reflecting differences in clearance or virulence. Furthermore, the infection rate is indexed by strain to show the inhomogeneity in transmission.

\paragraph{Equilibria:}
The system admits up to three classes of fixed points: (i) the disease-free equilibrium (DFE) with \( (p_i^A, p_i^B) = (0,0) \); (ii) single-strain endemic equilibria, where only one strain persists; and (iii) coexistence equilibria, where both strains survive. 

\paragraph{Stability of the DFE:}
Let \( \mathbf{p}_t \in \mathbb{R}^{2n} \) denote the stacked vector of infection probabilities across all nodes, and write
\[
\mathbf{p}_{t+1} = g(\mathbf{p}_t).
\]
Linearizing around \( \mathbf{p^*} = \mathbf{0} \) yields the Jacobian:
\[
\nabla g(0) =
\begin{bmatrix}
(1-\delta_A)\mathbf{I} + \sigma_{A} \mathbf{A}^* & 0 \\
0 & (1-\delta_B)\mathbf{I} + \sigma_{B} \mathbf{B}^*
\end{bmatrix},
\]
where \( \mathbf{A}^* = \mathbf{R}\circ \mathbf{D}^{-\alpha} \circ \mathbf{A} \) encodes weighted long-range connectivity for strain \( A \), and analogously for strain \( B \).  

Define the linear operators
\begin{align}
\mathcal{S}_A &= (1-\delta_A)\mathbf{I} + \sigma_{A} \mathbf{A}^*, \label{eq:comp1} \\
\mathcal{S}_B &= (1-\delta_B)\mathbf{I} + \sigma_{B} \mathbf{B}^*.\label{eq:comp2}
\end{align}
By standard linear stability theory, the DFE is locally asymptotically stable if
\[
\rho(\mathcal{S}_A) < 1 \quad \text{and} \quad \rho(\mathcal{S}_B) < 1,
\]
where \( \rho(\cdot) \) denotes the spectral radius. Following similar steps like in the symbiosis case, Equations \ref{eq:comp1}-\ref{eq:comp2} simplify to 
\[
\frac{\sigma_{A}}{\delta_A} < \frac{1}{\lambda_{1,A^*}}, 
\qquad 
\frac{\sigma_{B}}{\delta_B} < \frac{1}{\lambda_{1,B^*}}.
\]
 where  \( \lambda_{1,A^*}, \lambda_{1,B^*} \) are the respective spectral radii for the strains.
Let $S_K$ denotes the survival score of strain $K \in \{A, B\}$, defined as
\[
S_K := \frac{\sigma_{K}}{\delta_K} \, \lambda_{1, K^*}
\]
then extinction occurs of $S_K<1$ otherwise the strain $K$ survives.

If $S_K < 1$ for all strains $K$, then both infections go extinct. If the condition holds for only one strain, the system reduces to pure competition in which the other strain survives alone. When neither strain satisfies the extinction condition, the outcome depends on nonlinear effects and the underlying network structure, which together determine whether stable coexistence emerges or one strain eventually dominates. These outcomes align with the classification obtained in continuous-time competition models analyzed via monotone dynamical systems \cite{doshi2021competing}.

\section{Numerical results}
\subsection{Symbiosis interation}
Let \( G_A(n, p_1) \), \( G_B(n, p_2) \) and $G_{AB}(n, p_3)$ denote three $one-$dimensional scale-free random graphs representing the contact networks for strains \( A \), \( B \) and $AB$, respectively. Formally, a scale-free graph is a network whose degree distribution follows a power-law form, typically given by
\begin{equation}
    P(k) \sim k^{-\gamma}, 
\end{equation}
where \( P(k) \) is the probability that a randomly chosen node has out degree \( k \), and \( \gamma > 1 \) is the exponent. This suggests that a limited number of nodes (hubs) exhibit exceptionally high degrees of connectivity, whereas the majority display low levels of connectivity, culminating in a heterogeneous network architecture. 

All graphs share the same set of \( n \) nodes (individuals) but differ in their edge orientations, reflecting distinct transmission pathways or contact intensities for each strain. We adopt scale-free random graphs due to their ability to capture heterogeneous contact structures commonly observed in real-world populations, such as heavy-tailed degree distributions, which are known to influence epidemic thresholds and dynamics in nontrivial ways. We summarise the graph characteristics used in the simulations in \autoref{tab:sfgraph}. The spectral radius decreases with increasing values of $\gamma$ and $\alpha$, which raises the epidemic threshold.

\begin{table}[ht!]
\centering
\begin{tabular}{c|ccc|c}
\hline
$\gamma$ & \multicolumn{3}{c|}{$\lambda_{1}$ for different $\alpha$} & Avr. Graph  \\
\cline{2-4}
 & $\alpha = 0.5$ & $\alpha = 1.5$ & $\alpha = 3.0$ & distance \\
\hline
2.0 & 22.37 & 13.59 &  9.30 & 2.57\\
3.5 & 16.20 &  9.16 &  5.63 & 3.08\\
5.0 & 15.84 &  8.51 &  5.03 & 3.13\\
\hline
\end{tabular}
\caption{A scale-free graph with 400 edges and 150 nodes. The average graph distance represents the mean length of the shortest paths between all pairs of nodes, while $\lambda_1$ is the spectral radius simulated over several Monte Carlo simulations.}
\label{tab:sfgraph}
\end{table}

We adopt the simulation framework by \cite{Namugera2025contact} which models the epidemic dynamics in the three coupled networks. The initial infection density is set to \(5\%\) independently in each of the two graphs $\{G_A, G_B\}$. At each discrete time step, the infection probability at each node is updated according to the rules defined in Equations~\eqref{eq:Aupdate}--\eqref{eq:ABupdate}, which incorporate contributions from all nodes in a long-range network structure, distance-dependent transmission, stochastic environmental weights, and nonlinear co-infection dynamics.

Throughout, we impose $\delta > \mu$ with $\delta - \mu = 0.2$, ensuring that co-infection resolves more slowly than single infections. In \autoref{fig:fig1}, the prevalence of co-infection, $\rho_{AB}$, decreases monotonically with the co-infection recovery rate $\mu$, exhibiting a sharp transition at a critical value $\mu_c$ beyond which co-infection collapses and at least one strain goes extinct. This threshold behavior is characteristic of nonequilibrium phase transitions. The critical point $\mu_c$ is strongly influenced by network structure: smaller values of the long-range exponent $\alpha$ stabilize co-infection by sustaining transmission across distant nodes, whereas larger $\alpha$ amplify distance penalization and trigger abrupt extinction. Higher connectivity $\gamma$ further lowers $\mu_c$, reinforcing the destabilization. Together, these results show that long-range interactions promote persistence, while stronger recovery and spatial decay drive sharp epidemic thresholds.

\begin{figure}[ht]
    \centering
        \includegraphics[width=1\linewidth]{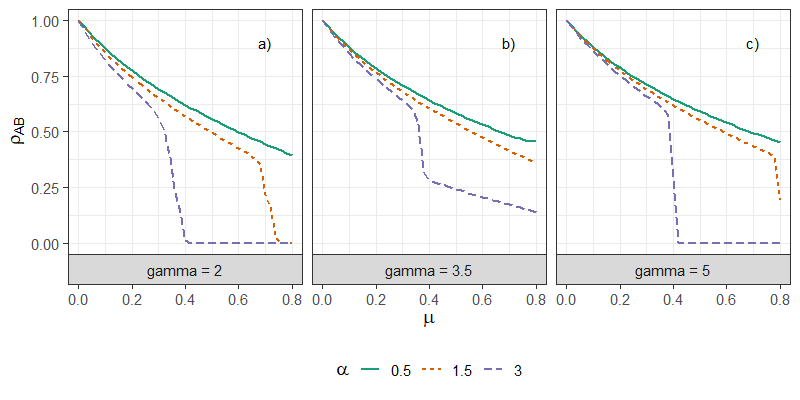}
        \caption{Phase transition for co-infection recovery rate. We consider a scale-free graph with $\sigma_K = 0.1$, $150$ nodes, $400$ vertices.}
    \label{fig:fig1}
\end{figure}

\begin{remark}[Remark 1:]
The co-infection prevalence decreases monotonically with the recovery rate~$\mu$. 
Formally, for $\epsilon>0$ and infection density $\rho$,
\[
\rho_{\mu}(\sigma, \gamma, \alpha) > 0 
\;\; \Rightarrow \;\;
\rho_{\mu - \epsilon}(\sigma, \gamma, \alpha) > 0, 
\quad \forall \epsilon > 0,
\]
which expresses that survival at a higher recovery rate implies survival at all lower ones.
\end{remark}

\begin{remark}[Remark 2:]
The co-infection density $(\rho_{AB})$, remains sensitive to the infection rates $\lambda_{1A}$ and $\lambda_{1B}$. Actually, the critical $\mu_c$ coincides with the critical $\delta_c = \inf \{\delta>0; \quad \rho_{AB} = 0 \}$ where there is epidemic extinction. As noted in ~\cite{Namugera2025contact,beutel2012interacting}, it is essential to define an interaction threshold that governs the persistence of the co-infection in regimes where $S_A > 1 > S_B$. In this case, the appropriate comparison is between the nearest-neighbor model and the present long-range kernel. We conjecture that the critical threshold for co-infection survival should be lower in the latter setting.

\end{remark}

\subsection{Competition interaction}
In this model, the simulation is simplified to two independent scale-free graphs, $G_A(n, p_1)$ and $G_B(n,p_2)$, each tracking the dynamics of one strain. We fix the recovery rate at $\delta = 0.5$ for both strains. We present the results as trajectory paths of both strains under three representative scenarios:  
\begin{enumerate}
    \item \textbf{Extinction}: $S_A, S_B < 1$, where neither strain persists.
    \item \textbf{Domination}: $S_A, S_B > 1$ but $\sigma_A \gg \sigma_B$, resulting in strain A outcompeting B.
    \item \textbf{Coexistence}: $S_A, S_B > 1$ and $\sigma_A \approx \sigma_B$, allowing both strains to persist.
\end{enumerate}

The trajectory arrows represent the dynamical flow of the two-strain system in phase space. Convergence to an axis corresponds to a stable fixed point where one strain dominates; convergence to the origin corresponds to the absorbing state of global extinction. Interior trajectories approaching a nonzero equilibrium indicate coexistence. The direction of the arrows reflects the vector field: vertical and horizontal motions correspond to single-strain growth or decay, while diagonal motions capture coupled dynamics of both strains.

We fix a scale-free graph of $250$ nodes and $800$ edges. The allocation and out-degree distribution is determined by the $\gamma$-value. In order to compute the score $(S_K)$, we require the largest eigenvalues of the transmission matrix in \autoref{tab:sfgraph_comp}.

\begin{table}[ht]
\centering
\begin{tabular}{c|ccc}
\hline
    $\gamma$ & \multicolumn{3}{c}{$\lambda_{1}$ for different $\alpha$} \\
    \cline{2-4}
 & $\alpha = 0.5$ & $\alpha = 1.5$ & $\alpha = 3.0$ \\
\hline
2.0 & 36.07 & 20.99 & 13.15 \\
3.5 & 25.70 & 13.24 & 7.32 \\
5.0 & 24.16 & 12.21 & 6.42 \\
\hline
\end{tabular}
    \caption{Largest eigenvalue (spectral radius) $\lambda_{1}$ estimate of the effective transmission matrix for different scale-free graph $G(250, 800)$ with exponents $\gamma$ and $\alpha$ (averaged over several Monte Carlo realizations).}
    \label{tab:sfgraph_comp}
\end{table}

Using these estimates, we determine the infection rates $\sigma_K$ for the simulation scenarios presented in \autoref{tab:scenarios}.

\begin{table}[ht!]
\centering
\begin{tabular}{lccc l}
\hline
Scenario    & $\sigma_A$ & $\sigma_B$ \\
\hline
Extinction  & 0.02 & 0.01  \\
Single strain dominance & 0.25 & 0.10 \\
Coexistence & 0.40 & 0.30  \\
\hline
\end{tabular}
\caption{Parameter settings for three illustrative epidemic scenarios.}
\label{tab:scenarios}
\end{table}

In \autoref{fig:extinction}, we examine the dynamics under weaker infection parameters. Panel (a) shows that for smaller values of~$\gamma$, both strains are more likely to go extinct. Extinction occurs for $\alpha \in \{1.5, 3\}$, while lower values of~$\alpha$ allow persistence due to the stronger role of long-range interactions. As $\gamma$ increases, this long-range effect becomes less pronounced, and the influence of~$\alpha$ on persistence is gradually reduced.

\begin{figure}[ht]
    \centering
    \includegraphics[width=1\linewidth]{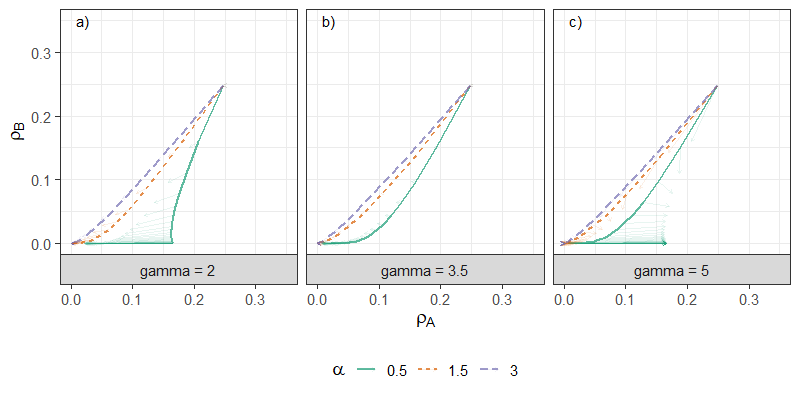}
    \caption{Extinction of the infection dominates across most of the parameter space. Simulations begin with the host population partially infected, evenly partitioned between the two strains, under a fixed value of $\delta = 0.5$, $\sigma_A = 0.03$ and $\sigma_B = 0.01$.}

    \label{fig:extinction}
\end{figure}

In scenario two shown in \autoref{fig:domination}, both strains are assigned $S>1$ to characterize their competitive strength in a heterogeneous environment. Low $\alpha$ values ($\alpha = 0.5, 1.5$) promote coexistence, whereas $\alpha > 2$ favours competitive exclusion. Arrows converging to the x-axis denote the complete extinction of strain B in favour of strain A, highlighting the asymmetry in strain dominance.

\begin{figure}[ht!]
    \centering
    \includegraphics[width=1\linewidth]{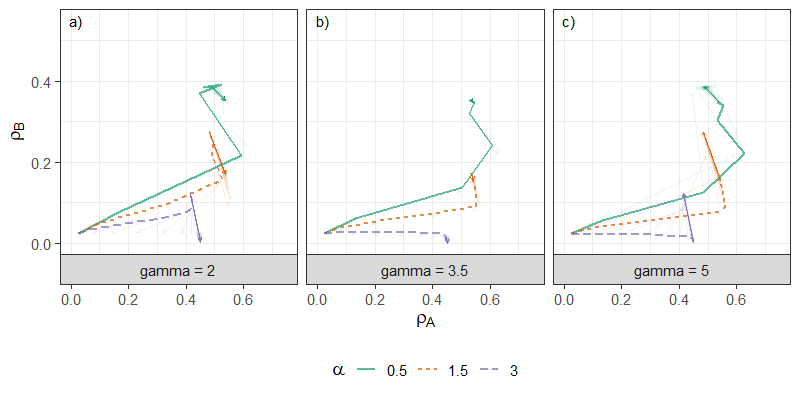}
    \caption{Single-strain domination. One strain outcompetes the other in extreme values of $\alpha$. Here, the trajectory converges to the axis corresponding to the dominant strain. Parameters: $\sigma_A = 0.25$, $\sigma_B = 0.1$.}
    \label{fig:domination}
\end{figure}

Finally, in the coexistence scenario (\autoref{fig:coexistance}), all strains persist at comparable densities. Environmental factors modulate the magnitude of strain prevalence but do not alter the overall coexistence pattern, indicating a robust and stable coexistence phase across the explored parameter space.

\begin{figure}[ht]
    \centering
    \includegraphics[width=1\linewidth]{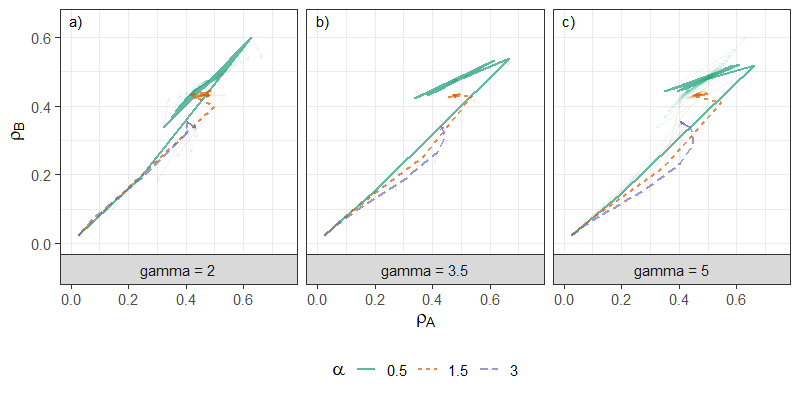}
    \caption{Both strains survive across all parameter regimes. Parameters: $\sigma_A = 0.4$, $\sigma_B = 0.3$ with $5\%$ of the nodes infected.}
    \label{fig:coexistance}
\end{figure}

\subsection{Statistical analysis}

To quantify the individual contributions of the parameters ($\alpha, \gamma,\mu$) to the observed dynamics, we fit statistical inference models to the simulated data and estimate their respective coefficients, providing a measure of each parameter's effect on strain competition outcomes.

\textbf{Co-infection:} Using data generated from numerical simulations, we fitted a Gaussian generalized linear model (GLM) to study the global effects of key parameters on the density of co-infection (\(\rho_{AB}\)) with \(\mu\), \(\alpha\) and \(\gamma\) as predictors (see \autoref{tab:glm_phiAB}). The intercept (\(1.065\)) represents the baseline level of \(\rho_{AB}\) when all predictors are zero.  The model shows that all three predictors are statistically significant.  
The parameter \(\mu\) has a strong negative effect (\(-0.979\), \(p<2\times10^{-16}\)), indicating that higher recovery of coinfected nodes reduces co-infection density. Similarly, \(\alpha\) has a significant negative effect (\(-0.109\), \(p<2\times10^{-16}\)), showing that steeper distance decay suppresses transmission. In contrast, \(\gamma\) exerts a small but significant positive effect (\(0.018\), \(p=0.00034\)), implying that greater degree heterogeneity favors coinfection.  Overall, the results highlight that \(\rho_{AB}\) is most sensitive to the recovery rate \(\mu\), followed by the spatial decay exponent \(\alpha\), with \(\gamma\) having a weaker but still significant contribution.

\begin{table}[ht]
\centering
\caption{GLM results for co-infection density ($\rho_{AB}$) with predictors $\mu$, $\alpha$, and $\gamma$. All coefficients are statistically significant. Other parameters: $\sigma = 0.1$.}
\label{tab:glm_phiAB}
\begin{tabular}{lcccc}
\hline
\textbf{Term} & \textbf{Estimate} & \textbf{Std. Error} & \textbf{t value} & \textbf{p-value} \\
\hline
Intercept & 1.0648 & 0.0236 & 45.198 & $<2\times10^{-16}$ *** \\
$\mu$     & -0.9787 & 0.0260 & -37.633 & $<2\times10^{-16}$ *** \\
$\alpha$  & -0.1094 & 0.0060 & -18.262 & $<2\times10^{-16}$ *** \\
$\gamma$  & 0.0182 & 0.0050 & 3.614 & 0.000344 *** \\
\hline
\multicolumn{5}{l}{\footnotesize Residual deviance = 5.1007, AIC = -522.67}
\end{tabular}
\end{table}

\paragraph{Competition(considering scenario 2):} 

The generalized additive model (GAM) used to study the nonlinear and time-dependent effects on the prevalence of the dominant strain \(A\) (\(\rho_A\)) reveals that its evolution is driven by strong, time-dependent effects of the network parameters \(\alpha\) and \(\gamma\). The intercept (\(0.495\)) as part of the results in \autoref{tab:GAM_phiA}, represents the baseline prevalence. 
The Effective Degrees of Freedom (EDF) signify the complexity of the smooth terms in the GAM, where an EDF close to $1$ indicates an almost linear effect. In contrast, higher EDF values indicate increasingly flexible and nonlinear relationships between the predictor and the response.
Both smooth terms are highly significant (\(p < 2\times10^{-16}\)) for 
\(s(t):\gamma\) (\(\mathrm{edf} = 9.8\)) which captures a complex, undulating dependence of prevalence on time modulated by the degree exponent \(\gamma\). Furthermore, \(s(t):\alpha\) (\(\mathrm{edf} = 5.85\)) describes a slightly simpler but equally strong influence of the spatial decay parameter \(\alpha\). The model explains approximately \(84.5\%\) of the variance (\(R^2_{\text{adj}} = 0.839\)), with a very small residual scale (\(\approx 0.0019\)), indicating an excellent fit.

These findings suggest that static effects of the network parameters cannot fully explain the spread of the dominant strain. Instead, \(\alpha\) and \(\gamma\) modulate the trajectory of infection at different stages of the epidemic. For instance, high-degree nodes (\(\gamma\)) dominate early transmission, while the spatial effect (\(\alpha\)) becomes increasingly important as the outbreak matures (refer to \autoref{fig:gam_fig} in the appendix). This model clarifies how the network topology and spatial structure sustain strain \(A\). The relation is similar for the competing strain \(B\). 

\begin{table}[ht]
\centering
\caption{Generalized additive model (GAM) results for the prevalence of strain \(A\) (\(\phi_A\)). 
Smooth terms capture the time-varying influence of the network parameters \(\alpha\) and \(\gamma\).}
\label{tab:GAM_phiA}
\begin{tabular}{lcccc}
\hline
Smooth term & edf & Ref.df & $F$ & p-value \\ \hline
$s(t):\gamma$ & 9.804 & 9.960 & 52.91 & $<2\times10^{-16}$ \\
$s(t):\alpha$ & 5.854 & 6.915 & 78.28 & $<2\times10^{-16}$ \\ \hline
\multicolumn{5}{l}{\footnotesize Adjusted $R^2 = 0.839$, Deviance explained = 84.5\%, n = 250, edges = 800.}
\end{tabular}
\end{table}

\section{Conclusion}

This work examined how environmental heterogeneity and long-range interactions shape the dynamics of symbiotic and competing two-strain epidemic models on complex networks. In the symbiotic case, we identified a critical co-infection recovery rate, $\mu_c$, whose value is strongly modulated by the interaction parameters $\alpha$ (spatial decay) and $\gamma$ (degree heterogeneity). Lower values of $\mu$, $\alpha$, and $\gamma$ favor epidemic persistence, whereas higher values raise the threshold $\mu_c$, suppressing co-infection.  A similar qualitative influence is observed for the competition model; however, environmental variability can additionally drive the exclusion of strains that would otherwise persist in isolation.

Statistical analyses further demonstrate that both $\alpha$ and $\gamma$ are significant determinants of epidemic growth, with $\alpha$ exerting the stronger effect. These findings underscore the importance of incorporating heterogeneous environments and long-range couplings into epidemic modeling and motivate a more rigorous characterization of phase transitions in multi-strain spreading processes.

\section*{Data Availability}
All data generated and analyzed during this study are available upon request from the corresponding author.

\section*{Conflicts of Interest}
The authors declare that there is no conflict of interest regarding the publication of this article.

\section*{Funding Statement}

This work has been supported by the Mathematics for Sustainable Development (MATH4SDG) project, which is a research and development project running in the period 2021-2026 at Makerere University-Uganda, University of Dar es Salaam-Tanzania, and the University of Bergen-Norway.

\section*{Acknowledgments}
I thank Ronald Katende, Nathan Muyinda, and the anonymous reviewers for their valuable comments during the writing of this paper.

{

}

\appendix
 \section{Appendix} 
We state the following Perron-Frobenius theorem.

\begin{theorem}[Perron-Frobenius]
Let $A \in \mathbb{R}^{n \times n}$ be a nonnegative and irreducible matrix. Then:  
\begin{enumerate}
    \item $A$ has a simple eigenvalue $\lambda_1 > 0$, such that $\lambda_1 = \max\{|\lambda| : \lambda \in \rho(A)\}$, where $\rho(A)$ denotes the spectrum of $A$.
    \item Every other eigenvalue $\lambda_i$ of $A$ satisfies $|\lambda| \leq \lambda_1$.
\end{enumerate}
\label{thm:pefro}
\end{theorem}

 \begin{figure}[ht]
     \centering
     \includegraphics[width=1\linewidth]{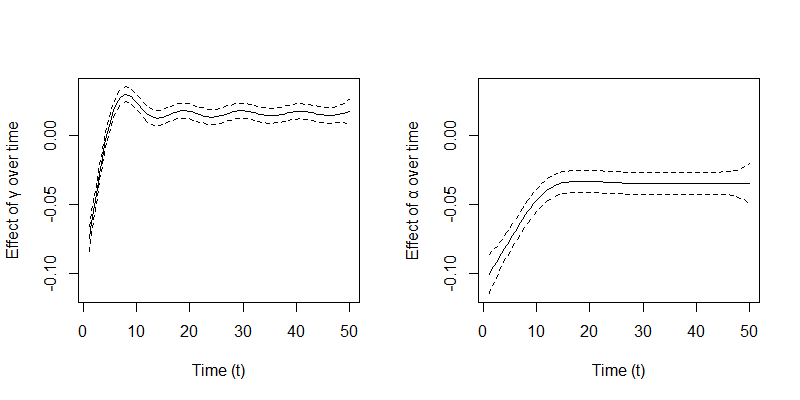}
     \caption{Generalized Additive Model (GAM) smooths for the effects of the parameters $\alpha$ and $\gamma$ on the prevalence of strain $A$ over time. The influence of $\gamma$ emerges almost immediately and stabilizes rapidly, whereas the effect of $\alpha$ develops more gradually before reaching a steady state.}
     \label{fig:gam_fig}
 \end{figure}
\end{document}